\newcommand{\cov}{\textnormal{Cov}}
\newcommand{\cor}{\textnormal{Cor}}
\newcommand{\var}{\textnormal{Var}}
\newcommand{\e}{\mathbb{E}}
\newtheorem{proposition}{Proposition}
\theoremstyle{remark}
\newtheorem{remark}{Remark}
\newtheorem{assumption}{Assumption}
\title{Setting the duration of online A/B experiments}
\author[1]{Harrison H. Li }
\author[2]{Chaoyu Yu \footnote{Corresponding author. Email: chaoyuyu@google.com}}
\affil[1]{Department of Statistics, Stanford University}
\affil[2]{Google Inc.}
\date{August 2024}
\begin{document}

\maketitle

\begin{abstract}
In designing an online A/B experiment, it is crucial to select a sample size and duration that ensure the resulting confidence interval (CI) for the treatment effect is the right width to detect an effect of meaningful magnitude with sufficient statistical power without wasting resources.
While the relationship between sample size and CI width
is well understood, the effect of experiment duration on CI width remains less clear.  
This paper provides an analytical formula for the width of a CI based on a ratio treatment effect estimator as a function of both sample size ($N$) and duration ($T$).
The formula is derived from a mixed effects model with two variance components.
One component, referred to as the \textit{temporal variance}, persists over time for experiments where the same users are kept in the same experiment arm across different days. The remaining \textit{error variance} component, by contrast, decays to zero as $T$ gets large. The formula we derive introduces a key parameter that we call the user-specific temporal correlation (UTC), which quantifies the relative sizes of the two variance components and can be estimated from historical experiments. Higher UTC indicates a slower decay in CI width over time.
On the other hand, when the UTC is 0 —-- as for experiments where users shuffle in and out of the experiment across days —-- the CI width decays at the standard parametric $1/T$ rate. We also study how access to pre-period data for the users in the experiment affects the CI width decay.
We show our formula closely explains CI widths on real A/B experiments at YouTube.
\end{abstract}

\section{Introduction}
A typical A/B test (randomized experiment) to evaluate the impact of a proposed feature in a large-scale online setting 
randomly diverts a pre-specified proportion of all user traffic (e.g. 1\%) into an experiment.
Half of these diverted users are randomly assigned into the control arm, 
and do not see any change from the status quo.
The remaining diverted users form the treatment arm and do see the proposed change. 

In many cases, an experimenter would like the same users to be tracked over the course of the different days of the experiment.
This requires a user who was in a particular arm of the experiment on a previous day
to remain in this arm on all subsequent days of the experiment. At YouTube, we call this type of experiment the \textit{user experiment}, to be contrasted with the \textit{user-day experiment} where we get a fresh set of users for each arm daily.
One reason to run a user experiment is that randomly shuffling a user in and out of an experiment may create a ``disorienting" or otherwise undesirable experience~\citep{tang2010layer}.
Another reason is illustrated by an experiment to measure whether a personalized website homepage increases the number of daily visitors.
In that setting we may want to understand how exposure to the new homepage over time impacts key user engagement metrics, requiring us to keep each user within the same experiment arm for weeks or months \citep{hohnhold2015focusing}. In this paper, we mostly focus on the user experiment;
in Section~\ref{sec:user-day},
we show how the user-day experiment arises as a special limiting case of our analysis.

When designing an experiment, investigators need to ask themselves the following two questions:
\begin{itemize}
\item How long ($T$) should the experiment run? A day? A week? Many weeks?
\item How large ($N$) should the experiment be? 1\%, 10\%, or 50\% of all users?
\end{itemize}
The practice of deciding the sample size ($N$) of the experiment, referred to as ``experiment sizing", is widely viewed as a crucial step in designing experiments~\citep{Kohavi2022ab}. 
A typical experiment sizing workflow chooses $N$ to achieve some desired level of statistical power to detect a meaningful change in a metric of interest via a sufficiently narrow CI~\citep{van2011statistical}.
However, this power, which is determined by the variance of the treatment effect estimate, depends on both the sample size and the length of the experiment. Holding the experiment duration constant, the relationship between the variance of a standard treatment effect estimate and the sample size $N$ is clear: assuming independence across users, standard textbook results indicate the variance is proportional to $1/N$,
and hence the width of the corresponding CI should be proportional to $1/\sqrt{N}$. 

At YouTube,
however,
we observe that keeping the sample size $N$ fixed and increasing the duration $T$,
the CI width often decays much more slowly than $1/\sqrt{T}$. Fig.~\ref{fig:size_length_experiment} shows CI widths as a function of $T$,
averaged over 500 YouTube experiments for four selected metrics:
\begin{itemize}
    \item DAU (Daily Active Users): The total count of users who are active on YouTube on a given day.
    \item Views: The total views of all YouTube videos.
    \item CTR (Click Through Rate): The click through rate of search results on YouTube.
    \item Crashes: The total number of YouTube sessions with crashes.
\end{itemize}
The CI width for each metric and experiment is normalized (divided) by the day 1 CI width to illustrate the decay in the CI width over time. The black solid curve in Fig.~\ref{fig:size_length_experiment} describes the decay rate $1/\sqrt{T}$. As we can see, different metrics showed quite different rates of CI width shrinkage over time. After 14 days, we find that the CI width for DAU is, on average, only 24$\%$ smaller than the CI width from running the experiment for only one day. On the other hand, for the Crashes metric, we can reduce the CI width by 58$\%$ by running the experiment for an extra 13 days. As a reference, if we received a new, independent sample of users each day, 13 extra days of the experiment would shrink the CI width by $100 \times (1-1/\sqrt{14}) \% \approx 73\%$.

\begin{figure}
    \centering
    \includegraphics[width=0.8\linewidth]{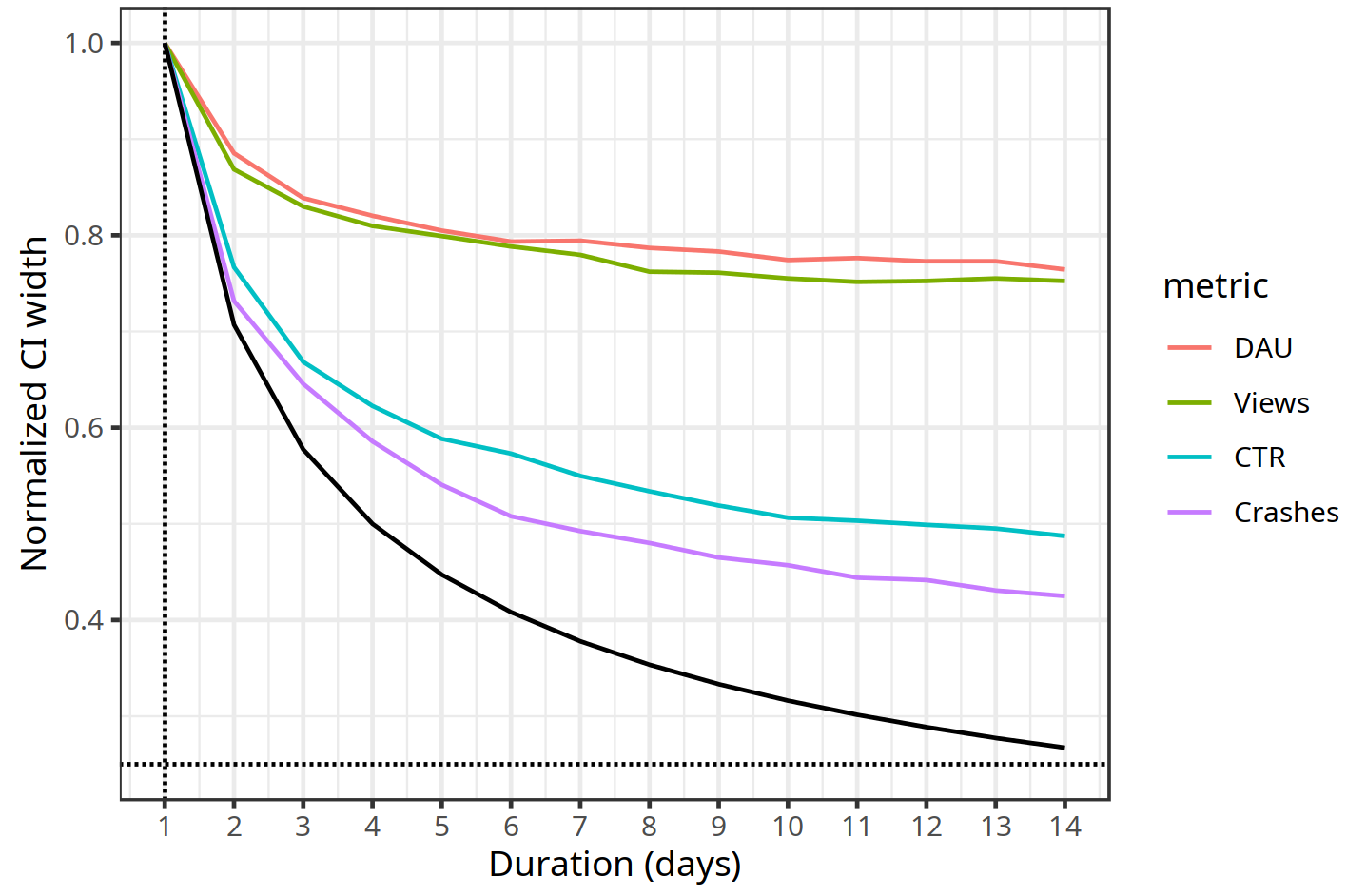}
    \caption{The relationship between the width of the CI for the treatment effect (normalized by the day 1 CI width) and the experiment duration $T$. The black solid curve corresponds to the $1/\sqrt{T}$ decay rate.}
    \label{fig:size_length_experiment}
\end{figure}

In this paper, we provide an analytical formula that models the variance of a ratio treatment effect estimator as a function of the duration $T$.
This provides a structural explanation of the shapes of the curves shown in Fig.~\ref{fig:size_length_experiment}. 
The formula,
stated in Section~\ref{sec:main_result},
depends on a key parameter called the user-specific temporal correlation (UTC) that can vary for different metrics.
In Section~\ref{sec:model_detail},
we show how our formula can be derived from a two component mixed effects model that attempts to explain the correlation structure across different days of an experiment.
Then we discuss how access to pre-period data for the users in the experiment affects our variance calculations (Section~\ref{sec:prepost})
and introduce user-day experiments as a special case of the user experiment (Section~\ref{sec:user-day}).
Finally, we show how to estimate the UTC using prior experiments and perform a numerical study on real YouTube experiments in Section~\ref{sec:estimation}.
The overall aim is to facilitate a power analysis that provides guidance on experiment sizing and duration,
enabling better resource management and developer velocity.

\section{CI width as a function of experiment duration}
\label{sec:main_result}
In this section, we present the main result of this paper: a simple formula that describes the relationship between CI width and experiment duration. \\

\noindent
\textbf{Main result}: Let $|CI(T;N)|$ represent the CI width for the ratio (percentage) treatment effect of a metric in an A/B experiment with duration $T$ and sample size $N$.
Then for $N$ sufficiently large,
we expect
\begin{equation}
    \frac{|CI(T;N)|}{|CI(1;N)|} \approx \sqrt{\frac{1}{T} + \rho \frac{T-1}{T}}
    \label{eq:ci_decay_rate}
\end{equation}
where $\rho$ is the user-specific temporal correlation (UTC)
and is a metric-specific parameter that describes how much a metric is correlated across different experiment days due to the persistence of the same users in the experiment.

\begin{remark}
We assume $\rho \in [0, 1]$ throughout. 
See Section~\ref{sec:estimation} for a discussion of different strategies for estimating $\rho$ using previous A/B experiments. 
\end{remark}

\begin{remark} Given that $\sqrt{\frac{1}{T} + \rho \frac{T-1}{T}} \geq 1/\sqrt{T}$ for all $\rho \in [0,1]$,~\eqref{eq:ci_decay_rate} shows that multiplying the traffic size $N$ of an A/B experiment by $c \geq 1$ increases statistical power by no less than running the A/B experiment $c$ times longer.
\end{remark}

The right-hand side of~\eqref{eq:ci_decay_rate} is an increasing function of $\rho \in [0, 1]$: the higher the user-specific temporal correlation, 
the more slowly the CI will shrink over time. This is consistent with our findings in Fig.~\ref{fig:size_length_experiment}. Metrics like DAU are highly correlated over time: a user who showed up on YouTube every day last week is much more likely to show up on YouTube tomorrow than a user who did not up on YouTube in the past week. As a result, the UTC $\rho$ is high and running the experiment longer does not help narrow the CI of DAU much. On the other hand, for a metric like Crashes, the UTC is smaller and the CI width decay is closer to the fast $1/\sqrt{T}$ rate.

We note in particular that~\eqref{eq:ci_decay_rate} implies
\begin{equation}
\label{eq:positive}
\lim_{T \rightarrow \infty} |CI(T;N)| \approx \sqrt{\rho}|CI(1;N)|.
\end{equation}
Thus if $\rho>0$,
no matter how long we run the experiment,
the CI width will not shrink to zero.
The statistical intuition behind this phenomenon is that with the same users appear in the experiment across different days,
we can only possibly obtain information from a finite subset of users in the superpopulation,
even if we run the experiment infinitely long.

If an investigator additionally has access to pre-experiment data on the same users,
Section~\ref{sec:prepost} derives the following modification to~\eqref{eq:ci_decay_rate}:
\begin{equation}
\label{eq:ci_decay_rate_PP}
   \frac{|CI(T;N)|}{|CI(1;N)|} \approx \sqrt{\frac{1}{T} + \rho \frac{T-1}{T}} \sqrt{\frac{1-\lambda(T)^2}{1-\lambda(1)^2}}
\end{equation}
where
\[
\lambda(T) = \frac{1}{\sqrt{1+\frac{1-\rho}{\rho T_0}}\sqrt{1+\frac{1-\rho}{\rho T}}} 
\]
for $T_0$ the number of days of pre-experiment data available for each user.
For $\rho \in (0,1)$,
we have $\lambda(T)$ is increasing in $T$ and hence
access to pre-experiment data enables a faster decay in the CI width with time
(in addition to narrower CI's overall for all choices of $T$).

\section{A two-component mixed effects model for A/B experiment metrics}
\label{sec:model_detail}
In this section, we derive our formula~\eqref{eq:ci_decay_rate}
under various modeling assumptions that we state explicitly.
A user not interested in the mathematical details is encouraged to skip to Section~\ref{sec:estimation} for a discussion on how to estimate the UTC parameter $\rho$ and a validation of~\eqref{eq:ci_decay_rate} on real YouTube experiments.

We will derive~\eqref{eq:ci_decay_rate} for two types of metrics that differ depending on how they aggregate across users: additive metrics and ratio metrics.
Additive metrics like DAU are computed by simply summing across users,
while ratio metrics like CTR are computed as ratios of the sums of two additive metrics 
(e.g. clicks and impressions) across users. 

\subsection{User level modeling}
\label{sec:user}
We model user level behavior with the following common assumption,
which precludes the existence of network effects (see~\citet{xu2015network} for a further discussion of network effects):

\begin{assumption}
\label{assump:iid_users}
Users are sampled independently from a single superpopulation.
\end{assumption}

Mathematically, Assumption~\ref{assump:iid_users} ensures that measurements from different users are independent and identically distributed (i.i.d.).
It is reasonable for an experiment testing a feature where there is limited interaction between users.

We let $a_{utj}$ denote the value of a generic additive metric for user $u=1,\ldots,N$ in experiment arm $j \in \{r,c\}$ on day $t = 1,\ldots,T$  
(here $r$ denotes the treatment arm and $c$ denotes the control arm).
Note $N$ denotes the size of each arm,
so in our notation,
the total number of users in the experiment is $2N$.
A generic ratio metric is the ratio of two additive metrics,
which we call the ``numerator" metric and the ``denominator" metric.
We denote their values for user $u$ in experiment arm $j$ on day $t$ by $a_{utj;n}$ and $a_{utj;d}$, respectively.
Throughout the remainder of this paper,
all notation and assumptions pertaining to our generic additive metric $a_{utj}$ will carry over without further comment to the numerator and denominator metrics for our generic ratio metric (appending $;n$ and $;d$, as appropriate).
Table~\ref{table:notation} collects all such notation.

Our key modeling assumption is the following mixed effects model for $a_{utj}$:

\begin{assumption}
\label{assump:user_mean_decomposition}
For each user $u$, day $t$, and arm $j$, we have
\begin{equation}
a_{utj} = m_{tj} + \alpha_{uj} + e_{utj} \label{eq:user_mean_decomposition}.
\end{equation}
for fixed effects $m_{tj}$ and random effects $\alpha_{uj}$ independent of the errors $e_{utj}=a_{utj}-m_{utj}$.
We further assume the collection $\{\alpha_{uj} \mid u=1,\ldots,N, j=r,c\}$ is i.i.d. with mean 0 and variance $\sigma_A^2 < \infty$,
and that the collection of vectors $\{(e_{u1j},\ldots,e_{uTj}) \mid u=1,\ldots,N, j=r,c\}$ is i.i.d. with $\e(e_{utj})=0$ and $\var(e_{utj}) = \sigma_E^2 < \infty$ for each user $u=1,\ldots,N$, day $t=1,\ldots,T$, and arm $j=r,c$.
\end{assumption}

In~\eqref{eq:user_mean_decomposition},
$m_{tj}$
corresponds to the mean metric value in the superpopulation of users on day $t$ in arm $j$,
and is viewed as non-random.
In many practical settings,
we find that the day-to-day variability in a given metric of interest
swamps the magnitude of the treatment effect.
Thus, it is crucial to allow the $m_{tj}$ to vary with $t$.
Note, however, that we will not need to impose any assumptions on the $m_{tj}$ to derive our CI width formula~\eqref{eq:ci_decay_rate},
so they can henceforth be viewed as nuisance parameters.
The term $\alpha_{uj}$ in~\eqref{eq:user_mean_decomposition} is an additive random effect for user $u$ in arm $j$,
designed to capture the idiosyncracies in the behavior of the particular user $u$ in treatment arm $j$.
The main assumption captured by~\eqref{eq:user_mean_decomposition} is that such idiosyncracies are time invariant,
highlighted by the lack of a time subscript on these random effects.

\begin{table}[!htb]
\centering
\vspace{0.2in}
\label{table:notation}
\begin{tabular}{|p{0.2\linewidth}|p{0.7\linewidth}|}
\hline
Quantity & Description \\
\hline
$a_{utj}$ & Metric value for user $u$ on day $t$ in arm $j$ \\
\hline
$m_{tj}$ & Mean metric value on day $t$ in the superpopulation of users in arm $j$ \\
\hline
$\alpha_{uj}$ & Additive random effect for user $u$ in arm $j$ \\
\hline
$\sigma_A$ & Standard deviation of $a_{uj}$ \\
\hline
$e_{utj}$ & Randomness in user $u$'s metric value on day $t$ in arm $j$, i.e. $e_{utj} = a_{utj}-m_{utj}$ \\
\hline
$\sigma_E$ & Standard deviation of $e_{utj}$ \\
\hline
$\bar{\sigma}_E(T)$ & Standard deviation of time-averaged errors $T^{-1} \sum_{t=1}^T e_{utj}$ \\
\hline
$\sigma_{A;nd}$ & For ratio metrics only; the covariance between the random effects for the numerator and denominator metrics, i.e. $\cov(\alpha_{uj;n},\alpha_{uj;d})$ \\
\hline
$\bar{\sigma}_{E;nd}(T)$ & For ratio metrics only; the covariance between the time-averaged random fluctuations of the numerator and denominator metrics,
i.e. $\cov\left(T^{-1} \sum_{t=1}^T e_{utj;n}, T^{-1} \sum_{t=1}^T e_{utj;d}\right)$ \\
\hline
\end{tabular}
\caption{Some notation used throughout the text.
}
\end{table}


\subsection{Ratio treatment effect}

The estimand of interest 
is the ratio treatment effect $\theta$,
given by
$\theta = m_{tr}/m_{tc} - 1$ for the additive metric and  
$\theta = (m_{tr;n}/m_{tr;d})/(m_{tc;n}/m_{tc;d}) - 1$ 
for the ratio metric. The scale invariant nature of this quantity makes it a ubiquitous choice of estimand in online experimentation, enabling direct comparisons of treatment effects across diverse experiments and metrics. While we focus on the ratio treatment effect in our derivation of~\eqref{eq:ci_decay_rate},
we note that our arguments can be adapted to show that~\eqref{eq:ci_decay_rate} holds for estimating additive treatment effects as well, under the same assumptions.


The lack of a subscript $t$ in the definition of $\theta$ implicitly encodes the following assumption:

\begin{assumption}[No learning effects]
\label{assump:constant_ratio}
The ratios $m_{tr}/m_{tc}$, $m_{tr;n}/m_{tc;n}$, and $m_{tr;d}/m_{tc;d}$ are independent of $t$.
\end{assumption}

Assumption~\ref{assump:constant_ratio} implies the absence of any ``learning effects" where the treatment effect varies over time.
It is a simplifying assumption
to enable the aggregation of data across different days of the experiment
to consistently estimate the same estimand $\theta$,
regardless of the chosen duration of the experiment.

\subsection{Estimator and CI construction}
Our CI's for $\theta$ are based on the natural estimator
\[
\hat{\theta} = 
\begin{cases}
\frac{\bar{a}_{\cdot \cdot r}}{ \bar{a}_{\cdot \cdot c}} - 1 & \text{ for the additive metric} \\
\frac{\bar{a}_{\cdot \cdot r;n} / \bar{a}_{\cdot \cdot r;d}}{\bar{a}_{\cdot \cdot c;n} / \bar{a}_{\cdot \cdot c;d}} - 1 & \text{ for the ratio metric}
\end{cases}
\]
Above and hereafter,
a bar over a letter
together with a dot $\cdot$ in a subscript means taking a sample average over the dimension represented by the subscript,
i.e.
\[
\bar{a}_{\cdot\cdot j} = (NT)^{-1} \sum_{u=1}^N \sum_{t=1}^T a_{utj}, \qquad j \in \{r,c\}.
\]
is the average additive metric value over all users in arm $j$ across all days $t=1,\ldots,T$ of the experiment.

We now show that the estimator $\hat{\theta}$ satisfies a central limit theorem for each fixed $T$,
letting $N \rightarrow \infty$:
\begin{proposition}
\label{prop:clt_additive}
Suppose Assumptions~\ref{assump:iid_users}-\ref{assump:constant_ratio} hold for an additive metric $a_{utj}$.
Then in the notation of Table~\ref{table:notation} we have $\sqrt{N}(\hat{\theta}-\theta) \stackrel{d}{\rightarrow} \mathcal{N}(0,V_{\hat{\theta}}(T))$
as $N \rightarrow \infty$, where
\begin{equation}
\label{eq:theta_hat_var_additive}
V_{\hat{\theta}}(T) = (\theta + 1)^2 (\sigma_A^2 + \bar{\sigma}_E^2(T))(\bar{m}_{\cdot r}^{-2} + \bar{m}_{\cdot c}^{-2}).
\end{equation}
\end{proposition}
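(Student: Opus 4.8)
The plan is to recognize $\hat{\theta}$ as a smooth function of two arm-specific sample means and apply the delta method on top of a joint central limit theorem for those means. First I would collapse the time dimension by defining, for each user $u$ and arm $j$, the within-user time average $\bar{a}_{u\cdot j} = T^{-1}\sum_{t=1}^T a_{utj}$, so that $\bar{a}_{\cdot\cdot j} = N^{-1}\sum_{u=1}^N \bar{a}_{u\cdot j}$ is a genuine sample mean of i.i.d.\ summands. By Assumption~\ref{assump:iid_users} together with the i.i.d.\ structure of $\{\alpha_{uj}\}$ and $\{(e_{u1j},\ldots,e_{uTj})\}$ imposed in Assumption~\ref{assump:user_mean_decomposition}, the $\bar{a}_{u\cdot j}$ are i.i.d.\ across $u$ for each fixed $j$ and $T$. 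Using the decomposition~\eqref{eq:user_mean_decomposition}, $\bar{a}_{u\cdot j} = \bar{m}_{\cdot j} + \alpha_{uj} + \bar{e}_{u\cdot j}$ with $\bar{e}_{u\cdot j} = T^{-1}\sum_t e_{utj}$, so $\e[\bar{a}_{u\cdot j}] = \bar{m}_{\cdot j}$ and, since $\alpha_{uj}$ is independent of the errors, $\var(\bar{a}_{u\cdot j}) = \sigma_A^2 + \bar{\sigma}_E^2(T)$. This variance is finite because $\bar{\sigma}_E^2(T) = T^{-2}\sum_{s,t}\cov(e_{usj},e_{utj}) \leq \sigma_E^2 < \infty$ by Cauchy--Schwarz, so the temporal covariance structure of the errors (left unspecified by the assumptions) is absorbed cleanly into the single quantity $\bar{\sigma}_E^2(T)$.

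Next I would assemble the joint limit. Because the two arms consist of disjoint sets of users, the collections indexed by $j=r$ and $j=c$ are independent, so $\bar{a}_{\cdot\cdot r}$ and $\bar{a}_{\cdot\cdot c}$ are independent sample means. Applying the classical Lindeberg--L\'evy CLT to each arm and combining via independence yields
\[
\sqrt{N}\left( \begin{pmatrix} \bar{a}_{\cdot\cdot r} \\ \bar{a}_{\cdot\cdot c}\end{pmatrix} - \begin{pmatrix} \bar{m}_{\cdot r} \\ \bar{m}_{\cdot c}\end{pmatrix}\right) \stackrel{d}{\rightarrow} \mathcal{N}\!\left(0,\ (\sigma_A^2 + \bar{\sigma}_E^2(T))\, I_2\right)
\]
as $N \rightarrow \infty$, with $T$ held fixed.

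Then I would invoke the delta method with $g(x,y) = x/y - 1$, noting that $\hat{\theta} = g(\bar{a}_{\cdot\cdot r}, \bar{a}_{\cdot\cdot c})$ and — crucially using Assumption~\ref{assump:constant_ratio} — that $g(\bar{m}_{\cdot r}, \bar{m}_{\cdot c}) = \bar{m}_{\cdot r}/\bar{m}_{\cdot c} - 1 = \theta$, since constancy of the daily ratio $m_{tr}/m_{tc} = \theta+1$ forces $\bar{m}_{\cdot r} = (\theta+1)\bar{m}_{\cdot c}$. With gradient $\nabla g = (\bar{m}_{\cdot c}^{-1},\, -\bar{m}_{\cdot r}\bar{m}_{\cdot c}^{-2})$ evaluated at the limiting means, the delta method gives $\sqrt{N}(\hat{\theta}-\theta) \stackrel{d}{\rightarrow} \mathcal{N}(0, V_{\hat{\theta}}(T))$ with $V_{\hat{\theta}}(T) = (\sigma_A^2 + \bar{\sigma}_E^2(T))\,\nabla g^\top \nabla g = (\sigma_A^2 + \bar{\sigma}_E^2(T))(\bar{m}_{\cdot c}^{-2} + \bar{m}_{\cdot r}^2 \bar{m}_{\cdot c}^{-4})$. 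A final short algebraic step substitutes $\bar{m}_{\cdot r} = (\theta+1)\bar{m}_{\cdot c}$ to rewrite $\bar{m}_{\cdot c}^{-2} + \bar{m}_{\cdot r}^2\bar{m}_{\cdot c}^{-4} = (\theta+1)^2(\bar{m}_{\cdot r}^{-2} + \bar{m}_{\cdot c}^{-2})$, matching the symmetric form in~\eqref{eq:theta_hat_var_additive}.

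The argument is largely routine bookkeeping; the two points that deserve care are establishing that the within-user time-averages genuinely satisfy the i.i.d.\ hypotheses of the CLT with finite variance (which is what lets the unspecified within-user error correlations collapse into $\bar{\sigma}_E^2(T)$), and the closing algebraic identity, which is precisely where Assumption~\ref{assump:constant_ratio} enters to identify the delta-method variance with the stated expression. I do not anticipate a substantive obstacle, since each arm's estimator is an average of bounded-variance i.i.d.\ summands and $g$ is smooth away from $y=0$.
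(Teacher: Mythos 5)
Your proposal is correct and follows essentially the same route as the paper's own proof: a joint central limit theorem for the two arm-level sample means $\bar{a}_{\cdot\cdot r}, \bar{a}_{\cdot\cdot c}$ with diagonal covariance $(\sigma_A^2 + \bar{\sigma}_E^2(T))I_2$, followed by the delta method applied to $g(x,y)=x/y-1$. The paper states these two steps without elaboration; your write-up simply fills in the supporting details (i.i.d.\ structure of the within-user time averages, independence across arms, the gradient computation, and the use of Assumption~\ref{assump:constant_ratio} to rewrite the variance in the symmetric form of~\eqref{eq:theta_hat_var_additive}), all of which are accurate.
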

\begin{proof}
By the standard central limit theorem, we have
\[
\sqrt{N}
\left(
\begin{bmatrix}
\bar{a}_{\cdot\cdot r} \\
\bar{a}_{\cdot\cdot c}
\end{bmatrix}
- 
\begin{bmatrix}
\bar{m}_{\cdot r} \\
\bar{m}_{\cdot c}
\end{bmatrix}
\right)
\stackrel{d}{\rightarrow}
\mathcal{N}\left(
\begin{bmatrix}
0 \\
0
\end{bmatrix},
\begin{bmatrix}
\sigma_A^2 + \bar{\sigma}_E^2(T) & 0 \\
0 & \sigma_A^2 + \bar{\sigma}_E^2(T)
\end{bmatrix}
\right)
\]
Then equation~\eqref{eq:theta_hat_var_additive} follows by the Delta method.
\end{proof}
\begin{proposition}
\label{prop:clt_ratio}
Suppose Assumptions~\ref{assump:iid_users}-\ref{assump:constant_ratio} hold for a ratio metric with numerator and denominator metrics $a_{utj;n}$ and $a_{utj;d}$,
respectively.
Then in the notation of Table~\ref{table:notation} we have
$\sqrt{N}(\hat{\theta}-\theta) \stackrel{d}{\rightarrow} \mathcal{N}(0,V_{\hat{\theta}}(T))$
as $N \rightarrow \infty$, where
\begin{align}
V_{\hat{\theta}}(T) = (\theta+1)^2\sum_{j \in \{r,c\}} &\Bigg[ \frac{\sigma_{A;n}^2 + \bar{\sigma}_{E;n}^2(T)}{\bar{m}_{\cdot j;n}^2} + \frac{\sigma_{A;d}^2 + \bar{\sigma}_{E;d}^2(T)}{\bar{m}_{\cdot j;d}^2} -  \frac{2(\sigma_{A;nd} +   \bar{\sigma}_{E;nd}(T))}{\bar{m}_{\cdot j;n} \bar{m}_{\cdot j;d}}\Bigg] \label{eq:theta_hat_var_ratio}
\end{align}
\end{proposition}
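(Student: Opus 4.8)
The plan is to mirror the proof of Proposition~\ref{prop:clt_additive}, but now treating $\hat{\theta}$ as a smooth function of \emph{four} sample means rather than two. Writing $g(x_1,x_2,x_3,x_4) = \frac{x_1 x_4}{x_2 x_3} - 1$ and setting $(x_1,x_2,x_3,x_4) = (\bar{a}_{\cdot\cdot r;n}, \bar{a}_{\cdot\cdot r;d}, \bar{a}_{\cdot\cdot c;n}, \bar{a}_{\cdot\cdot c;d})$, we have $\hat{\theta} = g(x_1,x_2,x_3,x_4)$. The first step is to establish a multivariate CLT for this vector of four means. Under Assumption~\ref{assump:iid_users}, each user contributes an i.i.d.\ vector of time-averaged metric values, and Assumption~\ref{assump:user_mean_decomposition} guarantees finite second moments, so the classical multivariate CLT gives that $\sqrt{N}$ times the centered vector of sample means converges to a mean-zero Gaussian with some covariance matrix $\Sigma$.

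Second, I would compute $\Sigma$ explicitly. Because the users in the treatment arm are distinct from and independent of those in the control arm (Assumption~\ref{assump:iid_users}), $\Sigma$ is block diagonal with one $2\times 2$ block per arm $j\in\{r,c\}$. Within an arm, decomposing $\bar{a}_{u\cdot j;n} = \bar{m}_{\cdot j;n} + \alpha_{uj;n} + \bar{e}_{u\cdot j;n}$ and using the independence of the random effects and the errors in Assumption~\ref{assump:user_mean_decomposition}, the diagonal entries are $\sigma_{A;n}^2 + \bar{\sigma}_{E;n}^2(T)$ and $\sigma_{A;d}^2 + \bar{\sigma}_{E;d}^2(T)$, while the off-diagonal entry is the numerator--denominator covariance $\sigma_{A;nd} + \bar{\sigma}_{E;nd}(T)$; these are exactly the quantities collected in Table~\ref{table:notation}.

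Third, apply the multivariate Delta method. This requires checking that $g$ evaluated at the true time-averaged means equals $\theta+1$, which is precisely where Assumption~\ref{assump:constant_ratio} enters: since $m_{tr;n}/m_{tc;n}$ and $m_{tr;d}/m_{tc;d}$ are constant in $t$, the time averages satisfy $\bar{m}_{\cdot r;n} = (m_{tr;n}/m_{tc;n})\bar{m}_{\cdot c;n}$ and likewise for the denominator, so $g(\bar{m}_{\cdot r;n}, \bar{m}_{\cdot r;d}, \bar{m}_{\cdot c;n}, \bar{m}_{\cdot c;d}) = (m_{tr;n}/m_{tr;d})/(m_{tc;n}/m_{tc;d}) = \theta+1$. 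I would then evaluate the gradient of $g$ at these means, using the convenient identities $\partial g/\partial x_1 = (\theta+1)/\bar{m}_{\cdot r;n}$ and $\partial g/\partial x_2 = -(\theta+1)/\bar{m}_{\cdot r;d}$, and analogously for the control coordinates, the opposite signs reflecting that numerators enter $g$ positively and denominators negatively.

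Finally, the asymptotic variance is $\nabla g^\top \Sigma \nabla g$. The block-diagonal structure of $\Sigma$ makes this a sum of two arm-specific quadratic forms, and substituting the gradient entries and covariance block for arm $j$ yields the bracketed expression in~\eqref{eq:theta_hat_var_ratio}, with the factor $(\theta+1)^2$ pulled out and the cross term carrying a minus sign from the gradient's sign pattern. I expect the only genuinely subtle step to be the verification that $g(\bar{m}) = \theta+1$ under Assumption~\ref{assump:constant_ratio}: the estimand $\theta$ is defined through the per-day means $m_{tj}$, but $\hat{\theta}$ is built from time-averaged data, so without the no-learning-effects assumption the target of estimation could depend on $T$ and the Delta-method expansion would not be centered correctly. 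The remaining gradient and matrix-algebra bookkeeping is routine.
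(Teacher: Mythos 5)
Your proposal is correct and takes essentially the same route as the paper: a multivariate CLT for the arm-level sample means with the block-diagonal covariance structure you describe, followed by the Delta method; the paper merely organizes the computation as two nested two-dimensional Delta-method applications (first to each arm's ratio $\bar{a}_{\cdot\cdot j;n}/\bar{a}_{\cdot\cdot j;d}$, then to the ratio of ratios) instead of your single four-dimensional one, and it leaves implicit the centering check via Assumption~\ref{assump:constant_ratio} that you spell out. One notational slip worth fixing: since your $g$ already includes the $-1$, its value at the true time-averaged means is $\theta$, not $\theta+1$ (it is the quantity $x_1x_4/(x_2x_3)$ that equals $\theta+1$ there), though this affects nothing in the gradient or variance computation.
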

\begin{proof}
For each arm $j \in \{r,c\}$,
by the standard central limit theorem we have
\[
\sqrt{N}
\left(
\begin{bmatrix}
\bar{a}_{\cdot\cdot j;n} \\
\bar{a}_{\cdot\cdot j;d}
\end{bmatrix}
- 
\begin{bmatrix}
\bar{m}_{\cdot\cdot j;n} \\
\bar{m}_{\cdot\cdot j;d}
\end{bmatrix}
\right)
\stackrel{d}{\rightarrow}
\mathcal{N}\left(
\begin{bmatrix}
0 \\
0
\end{bmatrix},
\begin{bmatrix}
\sigma_{A;n}^2 + \bar{\sigma}_{E;n}^2(T) & \sigma_{A;nd} + \bar{\sigma}_{E;nd}(T) \\
\sigma_{A;nd} + \bar{\sigma}_{E;nd}(T) & \sigma_{A;d}^2 + \bar{\sigma}_{E;d}^2(T)
\end{bmatrix}
\right)
\]
Then we can apply the Delta method to obtain the asymptotic variances of $\bar{a}_{\cdot\cdot r;n}/\bar{a}_{\cdot\cdot r;d}$ and $\bar{a}_{\cdot\cdot c;n}/\bar{a}_{\cdot\cdot c;d}$.
One more application of the Delta method then gives~\eqref{eq:theta_hat_var_ratio} for the asymptotic variance of $\hat{\theta}$.
\end{proof}

The standard asymptotically valid level-$\alpha$ CI for $\theta$ based on the central limit theorem for $\hat{\theta}$ takes the form
\[
\hat{\theta} \pm z_{1-\alpha/2} \sqrt{\frac{\hat{V}_{\hat{\theta}}(T)}{N}}
\]
where $z_{\alpha/2}$ is the $1-\alpha/2$ quantile of a standard Gaussian distribution
and $\hat{V}_{\hat{\theta}}(T)$ is a consistent estimator of the asymptotic variance $V_{\hat{\theta}}(T)$ given by~\eqref{eq:theta_hat_var_additive} for additive metrics and~\eqref{eq:theta_hat_var_ratio} for ratio metrics.
Such an estimate can be computed using plug-in consistent estimates of the unknown quantities appearing in~\eqref{eq:theta_hat_var_additive} or~\eqref{eq:theta_hat_var_ratio},
or using jackknife methods~\citep{ma2022indeed}.
It follows that 
\begin{equation}
\label{eq:CI_T_N}
|CI(T;N)| \approx 2z_{1-\alpha/2} \sqrt{\frac{V_{\hat{\theta}}(T)}{N}}.
\end{equation}

\subsection{User-specific temporal correlation}

We are now ready to derive~\eqref{eq:ci_decay_rate} based on~\eqref{eq:CI_T_N},
after making a few additional assumptions and approximations.

\begin{assumption}[No serial correlation]
\label{assump:no_serial_correlation}
For all users $u=1,\ldots,N$, arms $j=r,c$,
and days $t \neq t'$, we have $\cov(e_{utj},e_{ut'j})=0$ for the additive metric and 
\[
\cov\left(
\begin{bmatrix}
e_{utj;n} \\
e_{utj;d}
\end{bmatrix},
\begin{bmatrix}
e_{ut'j;n} \\
e_{ut'j;d}
\end{bmatrix}
\right)
= 
\begin{bmatrix}
0 & 0 \\
0 & 0
\end{bmatrix}
\]
for the ratio metric.
\end{assumption}

Under Assumption~\ref{assump:no_serial_correlation},
the variance of the time-averaged errors
decays as $T^{-1}$:
\begin{equation}
\label{eq:no_serial_correlation}
\bar{\sigma}_E^2(T) = \sigma_E^2/T.
\end{equation} 
Then
ignoring the dependence of $\bar{m}_{\cdot j}$ on the total number of days $T$
(we assume this is not predictable ahead of time,
so for the purposes of experiment sizing it is best to ignore such dependence by default),
by~\eqref{eq:theta_hat_var_additive} we have for additive metrics that
\begin{equation}
\label{eq:se_decay}
\frac{V_{\hat{\theta}}(T)}{V_{\hat{\theta}}(1)} = \frac{1}{T} + \rho \frac{T - 1}{T}
\end{equation}
for 
\begin{equation}
\label{eq:utc_additive}
\rho = \frac{\sigma_A^2}{\sigma_A^2+\sigma_E^2}
\end{equation}
which we define as the user-specific temporal correlation (UTC).
This matches~\eqref{eq:ci_decay_rate}.
A statistical interpretation of the UTC is as the correlation between the metrics $a_{utj}$ and $a_{ut'j}$ for the same user $u$:
\begin{equation}
\label{eq:utc_additive_interpretation}
\cor(a_{utj}, a_{ut'j}) = \frac{\cov(\alpha_{uj}+e_{utj},\alpha_{uj}+e_{ut'j})}{\sqrt{\var(a_{utj})\var(a_{ut'j})}} =  \frac{\sigma_A^2}{\sigma_A^2+\sigma_E^2} = \rho.
\end{equation}

For a ratio metric,
if we ignore the dependence of $\bar{m}_{\cdot j}$ on both $T$ and the arm $j \in \{r,c\}$,
then~\eqref{eq:theta_hat_var_ratio} reveals that~\eqref{eq:se_decay} holds if the UTC is defined by
\[
\rho = \frac{\sigma_{A;n}^2+\sigma_{A;d}^2+2\sigma_{A;nd}}{\sigma_{A;n}^2+\sigma_{A;d}^2+2\sigma_{A;nd}+\sigma_{E;n}^2+\sigma_{E;d}^2 + 2\sigma_{E;nd}}
\]
which gives~\eqref{eq:ci_decay_rate} for the case of ratio metric.
For interpretations' sake,
we verify the following analogue of~\eqref{eq:utc_additive_interpretation} holds for the ratio metric:
\begin{equation}
\label{eq:UTC_ratio}
\cor(a_{utj;n}+a_{utj;d}, a_{ut'j;n}+a_{ut'j;d}) = \rho.
\end{equation}

\section{Adjusting for pre-period data}
\label{sec:prepost}
The analysis to this point has assumed that we do not have access to any pre-experiment data for the users in the experiment.
Using this pre-period data, however, allows us to control for user-specific idiosyncracies,
thereby reducing the variance of the estimate of $\theta$~\citep{deng2013, soriano2017prepost}.
\cite{soriano2017prepost} proposes a Bayesian pre-post estimator that does this in our setting.
We focus on an asymptotically equivalent frequentist variant of this estimator that is easier to analyze.
For ease of exposition, we only consider additive metrics here, though the analysis could be extended to ratio metrics.

For each user $u$ and experiment arm $j \in \{r,c\}$,
we let $X_{uj}$ denote pre-period data which will be correlated with the ``post-period" metrics $a_{utj}$
collected during the experiment. 
Typically (and for the rest of this paper) $X_{uj}$ will consist of user $u$'s observations of the same metric as the experiment data,
averaged over a period of $T_0$ days just prior to the start of the experiment.

Our proposed estimator $\hat{\theta}_{PP}$ is the maximum likelihood estimator for $\theta$ under the following parametric assumption
for all users $u=1,\ldots,N$ and arms $j \in \{r,c\}$:
\begin{equation}
\label{eq:likelihood}
(X_{uj},\bar{a}_{u\cdot j}) \sim \mathcal{N}\left(
\begin{bmatrix}
m_X \\
\bar{m}_{\cdot j}
\end{bmatrix},
\begin{bmatrix}
\sigma_X^2 & \lambda(T) \sigma_X\sigma(T) \\
\lambda(T) \sigma_X\sigma(T) & \sigma^2(T)
\end{bmatrix}
\right).
\end{equation}
Here $\lambda(T) = \cor(X_{uj},\bar{a}_{u\cdot j})$ is the ``pre-post correlation"
and 
\[
\sigma^2(T)=\var(\bar{a}_{u\cdot j}) = \sigma_A^2 + \bar{\sigma}_E^2(T).
\]
Note that $X_{ur}$ and $X_{uc}$ are assumed to be identically distributed,
as they are observed before any treatment has been applied.

\begin{remark}
In cases where the normality assumption in~\eqref{eq:likelihood} is not reasonable,
we can divide the users at random into $Z$ user buckets
and let $X_{uj}$ and $\bar{a}_{u\cdot j}$ correspond to metrics aggregated (summed or averaged) over all users in a given bucket.
Then by the central limit theorem,
the normality assumption becomes reasonable again regardless of the original distribution of the user-level metric.
We note that the estimator $\hat{\theta}$ can be computed using only such aggregated data;
this helps reduce data storage requirements and promotes privacy~\citep{Chamandy2012estimating,soriano2017prepost}.
\end{remark}

\cite{soriano2017prepost} states that non-identifiability of the three mean parameters $m_X$, $\bar{m}_{\cdot r}$, and $\bar{m}_{\cdot c}$
prevents frequentist estimation of a ratio treatment effect.
We note that this is only an issue when we consider the conditional likelihood of the post period data $\bar{a}_{u \cdot j}$ given the pre-period data $X_{uj}$.
By contrast,
if we consider the \textit{joint} likelihood between the pre-period and post-period data,
as we do in~\eqref{eq:likelihood},
then all three parameters are indeed identifiable,
so we have a well-defined maximum likelihood estimate for the ratio treatment effect $\theta$.
Indeed, computing the partial derivatives of the log likelihood
under~\eqref{eq:likelihood}
with our observations of the i.i.d. pairs $(X_{ur}, X_{uc}, \bar{a}_{u \cdot r}, \bar{a}_{u \cdot c})_{u=1}^N$,
and then setting them equal to zero gives the following equations
for the maximum likelihood estimates of the parameters $m_X$, $\bar{m}_{\cdot j}$, $\sigma_X$, $\lambda(T)$, and $\sigma(T)$,
denoted with hats:
\begin{align*}
\hat{m}_X & = \frac{1}{2N} \sum_{j \in \{r,c\}} \sum_{u=1}^N X_{uj} = \bar{X}_{\cdot \cdot} \\
\hat{\bar{m}}_{\cdot j} & = \bar{a}_{\cdot \cdot j} - \frac{\hat{\lambda}(T)\hat{\sigma}(T)}{\hat{\sigma}_X}(\bar{X}_{\cdot j} - \hat{m}_X), \quad j \in \{r,c\} \\
\hat{\theta}_{PP} & = \frac{\hat{\bar{m}}_{\cdot r}}{ \hat{\bar{m}}_{\cdot c}} - 1.
\end{align*}

It follows by the consistency and efficiency results for the MLE, Slutsky’s Theorem, and the Delta Method that
as $N \rightarrow \infty$ we have
\begin{equation}
\label{eq:theta_pp_normality}
\sqrt{N}(\hat{\theta}_{PP}-\hat{\theta}) \stackrel{d}{\rightarrow} \mathcal{N}(0,V_{PP})
\end{equation}
where 
\begin{equation}
\label{eq:theta_pp_var}
V_{PP} = V_{PP}(T) = \sigma^2(T)(\theta+1)^2\left[\left(1-\frac{\lambda(T)^2}{2}\right)(\bar{m}_{\cdot r}^{-2} + \bar{m}_{\cdot c}^{-2}) - \frac{\lambda(T)^2}{\bar{m}_{\cdot r}\bar{m}_{\cdot c}}\right].
\end{equation}
 
As expected, the asymptotic variance $V_{PP}$ 
of the pre-post estimator 
is a decreasing function of the pre-post correlation $\lambda(T)$. 
When there is no treatment effect,
so that $\theta=0$ and $\bar{m}_{\cdot r} = \bar{m}_{\cdot c}$,
we can compute the asymptotic relative efficiency of the original estimator $\hat{\theta}$ to the pre-post estimator $\hat{\theta}_{PP}$ by
\begin{equation}
\label{eq:are}
\frac{V_{PP}(T)}{V_{\hat{\theta}}(T)}=1-\lambda(T)^2
\end{equation}
This implies that the pre-post CI's based on $\hat{\theta}_{PP}$ should have a width that is roughly a fraction $\sqrt{1-\lambda(T)^2}$
of the width of CI's based on $\hat{\theta}$.
We note this is the same as the standard error reduction found by~\citet{deng2013}
when adjusting for pre-period data in estimating an additive treatment effect.

We can obtain an explicit expression for $\lambda(T)$ under Assumptions~\ref{assump:iid_users}-\ref{assump:no_serial_correlation}.
Recall that $X_{uj}$ is computed
as an average of $T_0$ days of the metric for user $u$ in arm $j$ before the start of the experiment.
Then by~\eqref{eq:user_mean_decomposition},
we have
\[
X_{uj} = \frac{1}{T_0} \sum_{t=-T_0}^{-1} (m_{tj} + e_{utj}) + \alpha_{uj}
\]
and $\sigma_X^2=\sigma^2(T_0)$.
Assuming $\rho>0$, we compute
\begin{align}
\lambda(T) & = \frac{\cov(X_{uj},\bar{a}_{u \cdot j})}{\sqrt{\var(X_{uj})}\sqrt{\var(\bar{a}_{u \cdot j})}} \nonumber \\
& = \frac{\var(\alpha_{uj})}{\sqrt{\sigma^2(T_0)}\sqrt{\sigma^2(T)}} \nonumber \\
& = \frac{1}{\sqrt{1+\frac{1-\rho}{\rho T_0}}\sqrt{1+\frac{1-\rho}{\rho T}}} \label{eq:lambda}
\end{align}
where we use the fact that $\sigma^2(T) = \sigma_A^2 + \sigma_E^2/T$ under Assumption~\ref{assump:no_serial_correlation}
and the expression~\eqref{eq:utc_additive} for $\rho$.
If $\rho=0$ then $\lambda(T)=0$ for all $T$, 
and controlling for pre-period data does not offer any variance reduction.
In practice,
it will often increase variance slightly,
since $V_{PP}$
is an asymptotic variance which neglects
the errors in the estimation of the covariance matrix in~\eqref{eq:likelihood}.

From~\eqref{eq:lambda},
we see that the pre-post correlation $\lambda(T)$
is equal to the UTC $\rho$ if $T=T_0=1$.
This makes sense as when $T=T_0=1$,
the pre-post correlation
is the correlation between the metric from a single day before the experiment began,
and the same metric a day after the experiment began.
Since we have implicitly assumed that the start of the experiment
does not change the data generating process,
this is equivalent to the definition of the UTC in~\eqref{eq:utc_additive}.
We also observe from~\eqref{eq:lambda} 
that the pre-post correlation is increasing in both $T$ and $T_0$.
Intuitively, this makes sense,
as when either the pre-period or post-period durations gets longer,
the noise in the errors gets averaged out.

\section{User-day diversion}
\label{sec:user-day}
When a degraded user experience from randomizing the set of users diverted into a given experiment on each day is not a concern,
nor is the detection of learning effects,
an experimenter might consider a ``user-day" diversion.
In that setting,
instead of hashing a user identifier,
one will hash both the user identifier and the current date~\citep{tang2010layer}.
This has the effect of randomly introducing different users into the experiment on different days.
In the case that the proportion of traffic diverted into the experiment is small,
we'd expect (in the absence of network effects)
that data across different days is independent,
since the users on different days are essentially non-overlapping.
This can be conceptualized in~\eqref{eq:user_mean_decomposition}
by absorbing the random effects $\alpha_{ij}$,
into the errors $\epsilon_{itj}$,
i.e. taking $\sigma_A^2=0$ in our derivations above.
Then the UTC $\rho$ is zero as well,
and by~\eqref{eq:se_decay} the variance of $\hat{\theta}$ will decay with duration as $T^{-1}$.
As a 1-day user experiment and a 1-day user-day experiment are identical,
it follows that for all $T \ge 2$,
the variance of $\hat{\theta}$ will be lower for a user-day experiment than for a user experiment of the same size $N$.

However, the ability to control for pre-period data in a user experiment
means that using $\hat{\theta}_{PP}$ in a user experiment
can provide a tighter CI for $\theta$
than using $\hat{\theta}$ in a user-day experiment of the same size $N$.
Indeed we know by~\eqref{eq:are} and~\eqref{eq:lambda} that this will be the case for a 1-day experiment
whenever $\rho>0$ in the user experiment.
As the experiment duration $T$ gets longer,
however, for fixed size $N$ the variance of $\hat{\theta}$ from a user-day experiment decays to 0 at a $T^{-1}$ rate,
while the variance of $\hat{\theta}_{PP}$ in a corresponding user experiment does not decay to 0.
To see this, note that from~\eqref{eq:positive} we know the variance of $\hat{\theta}$ in a user experiment does not go to 0 whenever $\rho>0$.
Yet by~\eqref{eq:lambda},
for fixed $T_0$ the pre-post correlation does not approach 1 as $T$ goes to infinity.
We conclude that for longer duration experiments,
user-day experiments will yield more precise treatment effect estimates than user experiments of the same size.

We provide a simple figure to illustrate this trade-off between user experiments and user-day experiments graphically.
Fig.~\ref{fig:pp_vs_cd2} shows
the theoretical decay in
$V_{PP}(T)$ for a user-day experiment
and in $V_{\hat{\theta}}(T)$ for a user experiment 
as a function of the experiment duration $T$,
when $\rho=0.6$ and $T_0=7$.
We see that for $T \le 11$ days,
the user experiment adjusting for 7 days of pre-period data leads to a estimate of $\theta$ with lower asymptotic variance
than a user-day experiment of the same size.
The reverse is true for experiments longer than 11 days.

\begin{figure}
    \centering
    \includegraphics[width=0.8\linewidth]{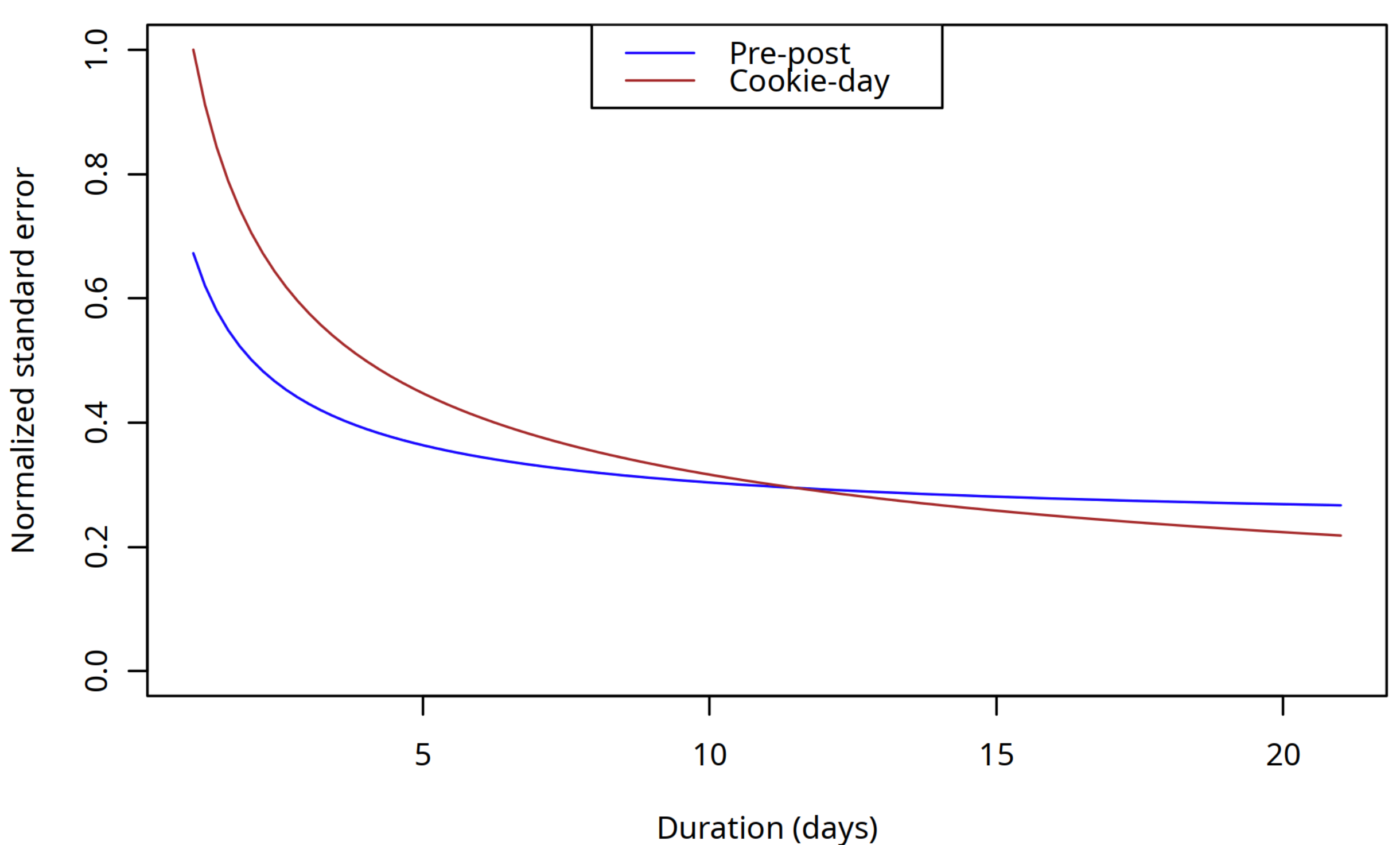}
    \caption{The asymptotic standard error $\sqrt{V_{PP}(T)}$ of the pre-post estimator $\hat{\theta}_{PP}$ as a function of experiment duration $T$ is given in blue.
    The asymptotic standard error $\sqrt{V_{\hat{\theta}}(T)}$ of the post-period estimator $\hat{\theta}$ in a user-day experiment of size $N$ as a function of $T$ is given in brown. The pre-post estimator assumes $T_0=7$ days of pre-period data are available,
    and that the UTC $\rho$ is equal to 0.6 in the user experiment.}
    \label{fig:pp_vs_cd2}
\end{figure}

\section{Estimation of UTC and numerical results}
\label{sec:estimation}
We now describe two ways to estimate the UTC parameter $\rho$
so that~\eqref{eq:ci_decay_rate} can be used to set the duration of a planned experiment.
The first method requires observations $a_{utj}^{(m)}$ of the same metric as the planned experiment from previous experiments
$m=1, 2,\ldots,M$.
We can obtain an estimate $\hat{\rho}^{(m)}$ from experiment $m$ using the sample correlations between $a_{utj}^{(m)}$ and $a_{ut'j}^{(m)}$ for $t \neq t'$
across all users in both arms.
Specifically,
we propose averaging the sample correlations over all pairs $(t,t')$ with $t < t'$.
In practice, if we expect Assumption~\ref{assump:no_serial_correlation} to be somewhat violated,
we might choose to only use pairs $(t,t')$ where $t'-t$ is below a certain lag.
This will typically lead to conservative (upward biased) estimates for $V_{\hat{\theta}}$,
particularly for larger $T$,
since typically we'd expect the serial correlation to decrease with the lag $t'-t$.
To obtain a single UTC estimate $\hat{\rho}$,
we can take an average the $\hat{\rho}^{(m)}$ across experiments
(possibly weighted proportional to their sample sizes, to reduce variance).

Alternatively, if the user-level data is not available
but we have the daily CI widths from past experiments, we can also estimate $\rho$ using equation~\eqref{eq:ci_decay_rate}.
Specifically, we can solve for $\rho$ given the CI widths at the end of any two distinct days in a given experiment.
To obtain our final estimate $\hat{\rho}$,
we simply average the estimates over all the different experiments and day pairs.
In practice, we find this method performs as well as using the raw data from previous experiments.
It is more attractive from an implementation standpoint as it does not require storage of any of the raw metric information from those experiments.

\begin{figure}
    \centering
    \includegraphics[width=0.8\linewidth]{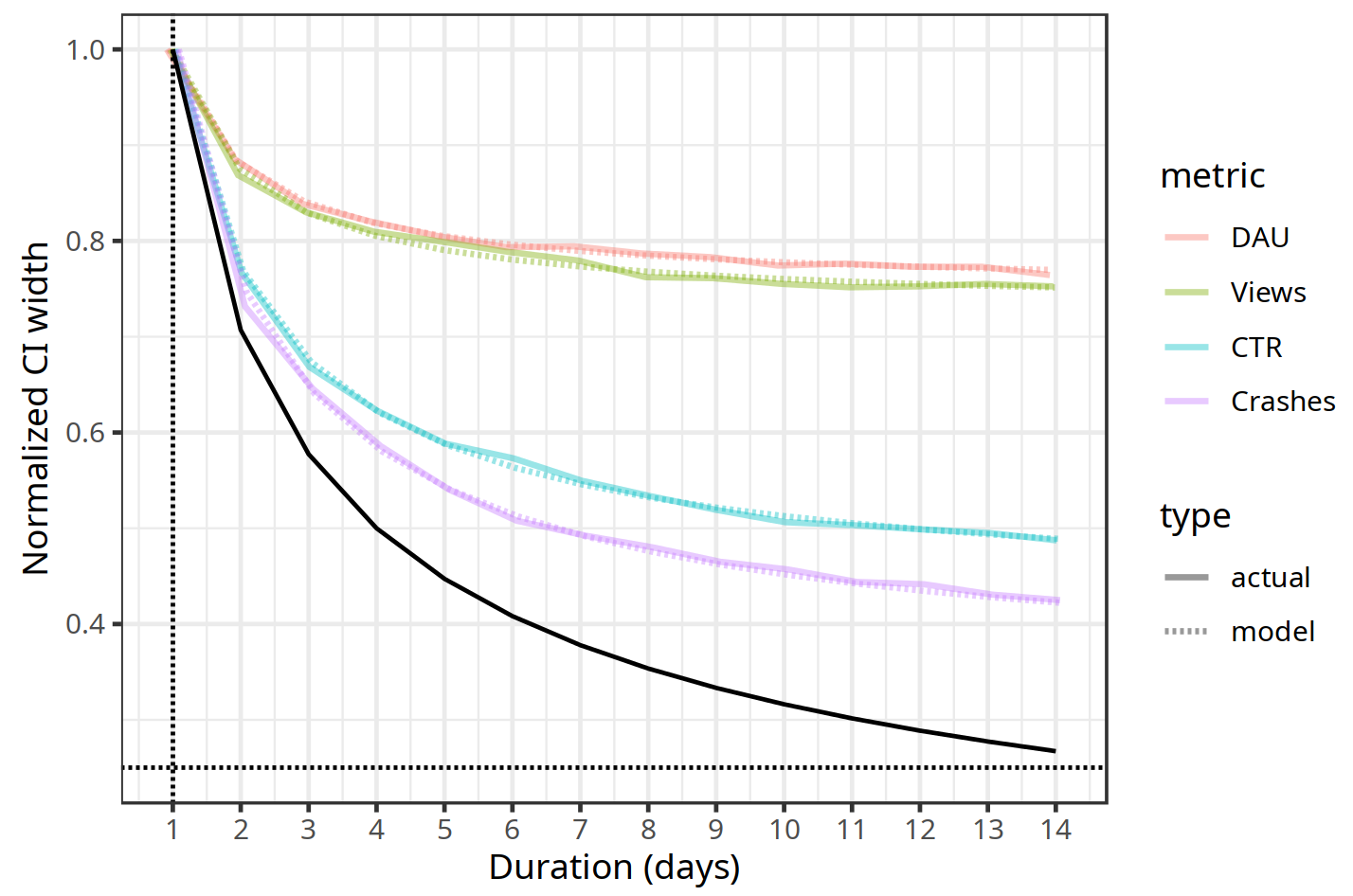}
    \caption{For each metric, the dotted line represents how we predict CI width (normalized by the day 1 CI width) would change with experiment duration $T$ based on our model, and the solid line represents the observed CI width change in actual experiments. The black solid curve describes the $1/\sqrt{T}$ decay rate.}
    \label{fig:actual_vs_fitted}
\end{figure}

We carry out this second estimation procedure for experiments on the four YouTube metrics described above: DAU, Views, CTR and Crashes,
and show that~\eqref{eq:ci_decay_rate} with the estimated UTC parameter $\rho$ closely matches the empirical decay of CI widths in a held-out set of experiments.
The results are presented in Fig. \ref{fig:actual_vs_fitted}. 
Specifically, we used 100 past experiments to generate an estimate $\hat{\rho}$ of the UTC $\rho$ for each metric.
The dotted curves are the normalized CI widths predicted by~\eqref{eq:ci_decay_rate} with the estimate $\hat{\rho}$ plugged in.
We averaged normalized CI widths over 400 separate experiments to give an empirical illustration of how CI width decays over time (solid curves). 

Evidently,
the modeled curves overlap very well with the empirical curves obtained from the CIs in actual experiments. 
In other words,
for each metric the formula~\eqref{eq:ci_decay_rate} models the empirical behavior well.
The simplicity of~\eqref{eq:ci_decay_rate} is useful for building the corresponding experimentation platforms and tools since for each metric we only need to save one parameter to enable planning of experiment duration. 


\section{Summary and discussion}
We have derived simple formulas~\eqref{eq:ci_decay_rate} and~\eqref{eq:ci_decay_rate_PP} that accurately predict the rate of decay of CI widths in large scale A/B tests in YouTube experiments.
The formula can be easily implemented as a tool or a feature in any experiment analysis portal. 
Here at YouTube, we have found such a tool very helpful in planning for the duration of experiments, stopping unnecessarily long experiments for resource saving.
The analysis of Sections~\ref{sec:prepost} and~\ref{sec:user-day} has also enabled us to more wisely select the experiment type (user experiment with pre-period correction vs. user-day experiment) for further improving statistical power.

In practice, power is not the only consideration in setting the experiment duration $T$.
A common practice is to always run an experiment for at least a full week to allow one to estimate potential day-of-the-week effects. Another common practice is to run the experiment very long (e.g. multiple months) to study the long term effects~\citep{Hohnhold2015focus}.
An interesting direction for future study would be to enhance~\eqref{eq:ci_decay_rate} to incorporate such objectives.

\bibliographystyle{apalike}
\bibliography{ref}
\end{document}